\theoremstyle{plain}
\newtheorem{theorem}{Theorem}[section]
\newtheorem{proposition}[theorem]{Proposition}
\newtheorem{definition}[theorem]{Definition}
\theoremstyle{remark}
\newtheorem{remark}{Remark}[section]
\newtheorem{example}[remark]{Example}
\renewcommand{\Im}{{\mathsf{Im}}}
\newcommand{\TRel}{{\mathsf{Rel}}}
\newcommand{\SRel}{{\mathsf{Rel}}}
\newcommand{\Rel}{{\mathsf{Rel}}}
\newcommand{\op}{{\mathsf{op}}}
\newcommand{\T}{{\mathsf{T}}}
\newcommand{\C}{{\mathbb{C}}}
\newcommand{\Set}{{\mathsf{Set}}}
\newcommand{\PS}{{\mathsf{P}}}
\newcommand{\Id}{{\mathsf{Id}}}
\newcommand{\Eq}{{\mathsf{Eq}}}
\newcommand{\Op}{{\mathcal{O}}}
\newcommand{\Pow}{{\mathcal{P}}}
\newcommand{\TW}{{\T}_W}
\newcommand{\SDist}{{\mathcal{S}}}
\newcommand{\tw}{{\mathsf{tw}}}
\newcommand{\tr}{{\mathsf{tr}}}
\newcommand{\ftr}{{\mathsf{ftr}}}
\newcommand{\st}{{\mathsf{st}}}
\newcommand{\dst}{{\mathsf{dst}}}
\renewcommand{\sup}{{\mathsf{sup}}}
\title{From Branching to Linear Time, Coalgebraically}
\author{Corina C\^{\i}rstea
\institute{University of Southampton}
\email{cc2@ecs.soton.ac.uk}}
\begin{document}
\maketitle
\begin{abstract}
We consider state-based systems modelled as coalgebras whose type incorporates branching, and show that by suitably adapting the definition of coalgebraic bisimulation, one obtains a general and uniform account of the linear-time behaviour of a state in such a coalgebra. By moving away from a boolean universe of truth values, our approach can measure the extent to which a state in a system with branching is able to exhibit a particular linear-time behaviour. This instantiates to measuring the probability of a specific behaviour occurring in a probabilistic system, or measuring the minimal cost of exhibiting a specific behaviour in the case of weighted computations.
\end{abstract}

\section{Introduction}
\label{1}

When analysing process behaviour, one of the early choices one has to make is between a linear and a branching view of time. In branching-time semantics, the choices a process has for proceeding from a particular state are taken into account when defining a notion of process equivalence (with bisimulation being the typical such equivalence), whereas in linear-time semantics such choices are abstracted away and the emphasis is on the individual executions that a process is able to exhibit. From a system verification perspective, one often chooses the linear-time view, as this not only leads to simpler specification logics and associated verification techniques, but also meets the practical need to verify all possible system executions.

While the theory of coalgebras has, from the outset, been able to provide a uniform account of various bisimulation-like observational equivalences (and later, of various simulation-like behavioural preorders), it has so far not been equally successful in giving a generic account of the linear-time behaviour of a state in a system whose type incorporates a notion of branching. For example, the generic trace theory of \cite{HasuoJS07} only applies to systems modelled as coalgebras of type $\T \circ F$, with the monad $\T : \Set \to \Set$ specifying a branching type (e.g.~non-deterministic or probabilistic), and the endofunctor $F : \Set \to \Set$ defining the structure of individual transitions (e.g.~labelled transitions or successful termination). The approach in loc.\,cit.~is complemented by that of \cite{JacobsSS12}, where traces are derived using a determinisation procedure similar to the one for non-deterministic automata. The latter approach applies to systems modelled as coalgebras of type $G \circ \T$, where again a monad $\T : \Set \to \Set$ is used to model branching behaviour, and an endofunctor $G$ specifies the transition structure. Neither of these approaches is able to account for potentially infinite traces, as typically employed in model-based formal verification. This limitation is partly addressed in \cite{cirstea-11}, but again, this only applies to coalgebras of type $\T \circ F$, albeit with more flexibility in the underlying category (which in particular allows a measure-theoretic account of infinite traces in probabilistic systems). Finally, none of the above-mentioned approaches exploits the compositionality that is intrinsic to the coalgebraic approach. In particular, coalgebras of type $G \circ \T \circ F$ (of which systems with both inputs and outputs are an example, see Example~\ref{input-output}) can not be accounted for by any of the existing approaches. This paper presents an attempt to address the above limitations concerning the types of coalgebras and the nature of traces that can be accounted for, by providing a \emph{uniform} and \emph{compositional} treatment of (possibly infinite) linear-time behaviour in systems with branching.

In our view, one of the reasons for only a partial success in developing a fully general coalgebraic theory of traces is the long-term aspiration within the coalgebra community to obtain a uniform characterisation of trace equivalence via a finality argument, in much the same way as is done for bisimulation (in the presence of a final coalgebra). This encountered difficulties, as a suitable category for carrying out such an argument proved difficult to find in the general case. In this paper, we tackle the problem of getting a handle on the linear-time behaviour of a state in a coalgebra with branching from a different angle: we do not attempt to directly define a notion of trace equivalence between two states (e.g.~via finality in some category), but focus on \emph{testing} whether a state is able to exhibit a particular trace, and on  measuring the extent of this ability. This "measuring" relates to the type of branching present in the system, and instantiates to familiar concepts such as the probability of exhibiting a given trace in probabilistic systems, the minimal cost of exhibiting a given trace in weighted computations, and simply the ability to exhibit a trace in non-deterministic systems.

The technical tool for achieving this goal is a generalisation of the notions of relation and relation lifting \cite{HermidaJ98}, which lie at the heart of the definition of coalgebraic bisimulation. Specifically, we employ relations valued in a partial semiring, and a corresponding generalised version of relation lifting. Our approach applies to coalgebras whose type is obtained as the composition of several endofunctors on $\Set$: one of these is a monad $\T$ that accounts for the presence of branching in the system, while the remaining endofunctors, assumed here to be polynomial, jointly determine the notion of linear-time behaviour. This strictly subsumes the types of systems considered in earlier work on coalgebraic traces \cite{HasuoJS07,cirstea-11,JacobsSS12}, while also providing compositionality in the system type.

Our main contribution, presented in Section~\ref{linear-time}, is a \emph{uniform} and \emph{compositional} account of linear-time behaviour in state-based systems with branching. A by-product of our work is an extension of the study of additive monads carried out in \cite{Kock2011,CoumansJ2011} to what we call \emph{partially additive monads} (Section~\ref{semiring}). Our approach can be summarised as follows:
\begin{itemize}
\item We move from two-valued to multi-valued relations, with the universe of truth values being induced by the choice of monad for modelling branching. This instantiates to relations valued in the interval $[0,1]$ in the case of probabilistic branching, the set $\mathbb N^\infty = \mathbb N \cup \{\infty\}$ in the case of weighted computations, and simply $\{\bot,\top\}$ in the case of non-deterministic branching. This reflects our view that the notion of truth used to reason about the observable behaviour of a system should be dependent on the branching behaviour present in that system. Such a dependency is also expected to result in temporal logics that are more natural and more expressive, and at the same time have a conceptually simpler semantics. In deriving a suitable structure on the universe of truth values, we generalise results on additive monads \cite{Kock2011,CoumansJ2011} to \emph{partially additive monads}. This allows us to incorporate probabilistic branching under our approach. We show that for a commutative, partially additive monad $\T$ on $\Set$, the set $\T 1$ carries a partial semiring structure with an induced preorder, which in turn makes $\T 1$ an appropriate choice of universe of truth values.
\item We generalise and adapt the notion of relation lifting used in the definition of coalgebraic bisimulation, in order to (i) support multi-valued relations, and (ii) abstract away branching. Specifically, we make use of the partial semiring structure carried by the universe of truth values to generalise relation lifting of polynomial endofunctors to multi-valued relations, and employ a canonical \emph{extension lifting} induced by the monad $\T$ to capture a move from branching to linear time. The use of this extension lifting allows us to make formal the idea of testing whether, and to what extent, a state in a coalgebra with branching can exhibit a particular \emph{linear-time} behaviour. Our approach resembles the idea employed by partition refinement algorithms for computing bisimulation on labelled transition systems with finite state spaces \cite{KS90}. There, one starts from a single partition of the state space, with all states related to each other, and repeatedly refines it through stepwise unfolding of the transition structure, until a fixpoint is reached. Similarly, we start by assuming that a state in a system with branching can exhibit any linear-time behaviour, and moreover, assign the maximum possible value to each pair consisting of a state and a linear-time behaviour. We then repeatedly refine the values associated to such pairs, through stepwise unfolding of the coalgebraic structure.
\end{itemize}

The present work is closely related to our earlier work on maximal traces and path-based logics \cite{cirstea-11}, which described a game-theoretic approach to testing if a system with non-deterministic branching is able to exhibit a particular trace. Here we consider arbitrary branching types, and while we do not emphasise the game-theoretic aspect, our use of greatest fixpoints has a very similar thrust.

\paragraph{Acknowledgements} Several fruitful discussions with participants at the 2012 Dagstuhl Seminar on Coalgebraic Logics helped refine the ideas presented here. Our use of relation lifting was inspired by the recent work on coinductive predicates \cite{Hasuo12}, itself based on the seminal work in \cite{HermidaJ98} on the use of predicate and relation lifting in the formalisation of induction and coinduction principles. Last but not least, the comments received from the anonymous reviewers contributed to improving the presentation of this work and to identifying new directions for future work.

\section{Preliminaries}

\subsection{Relation Lifting}
\label{rel-lifting}
The concepts of \emph{predicate lifting} and \emph{relation lifting}, to our knowledge first introduced in \cite{HermidaJ98}, are by now standard tools in the study of coalgebraic models, used e.g.~to provide an alternative definition of the notion of bisimulation (see e.g.~in \cite{JacobsBook}), or to describe the semantics of coalgebraic modal logics \cite{Pattinson03,Moss99}. While these concepts are very general, their use so far usually restricts this generality by viewing both predicates and relations as sub-objects in some category (possibly carrying additional structure). In this paper, we make use of the full generality of these concepts, and move from the standard view of relations as subsets to a setting where relations are valuations into a universe of truth values. This section recalls the definition of relation lifting in the standard setting where relations are given by monomorphic spans.

Throughout this section (only), {$\Rel$} denotes the category whose objects are binary relations $(R,\langle r_1,r_2 \rangle)$ with $\langle r_1,r_2 \rangle : R \to X \times Y$ a monomorphic span, and whose arrows from $(R,\langle r_1,r_2 \rangle)$ to $(R',\langle r_1',r_2' \rangle)$ are given by pairs of functions $(f : X \to X'\,,\, g : Y \to Y')$ ~s.t.~ $(f \times g) \circ  \langle r_1,r_2 \rangle$ factors through $\langle r_1',r_2' \rangle$:
\[\UseComputerModernTips\xymatrix{
R \ar@{-->}[d] \ar@{>->}[r]^-{\langle r_1,r_2 \rangle} & X \times Y \ar[d]^-{f \times g} \\
R' \ar@{>->}[r]^-{\langle r_1',r_2' \rangle} & X' \times Y' }\]
In this setting, the \emph{relation lifting of a functor $F : \Set \to \Set$} is defined as a functor $\Rel(F) : \Rel \to \Rel$ taking a relation $\langle r_1,r_2 \rangle : R \to X \times Y$ to the relation defined by the span $\langle F(r_1),F(r_2) \rangle : F(R) \to F(X) \times F(Y)$, obtained via the unique epi-mono factorisation of $\langle F(r_1),F(r_2) \rangle$:
\[\UseComputerModernTips\xymatrix{
R \ar@{>->}[d]_-{\langle r_1,r_2 \rangle} & F(R) \ar[d]_-{\langle F(r_1),F(r_2) \rangle} \ar@{->>}[r] & \Rel(F)(R)\ar@{>->} [dl]\\
X \times Y & F(X) \times F(Y)
}\]
It follows easily that this construction is functorial, and in particular preserves the order $\le$ between relations on the same objects given by $(R,\langle r_1,r_2 \rangle) \le (S,\langle s_1,s_2 \rangle)$ if and only if $\langle r_1,r_2 \rangle$ factors through $\langle s_1,s_2 \rangle$:
\[\UseComputerModernTips\xymatrix@+0.3pc{
R \ar@{>-->}[r] \ar@{>->}@<-1.5ex>[rr]_-{\langle r_1,r_2 \rangle} & S \ar@{>->}[r]^-{\langle s_1,s_2 \rangle} & X \times Y}\]
An alternative definition of $\Rel(F)$ for $F$ a \emph{polynomial functor} (i.e.~constructed from the identity and constant functors using \emph{finite} products and set-indexed coproducts) can be given by induction on the structure of $F$. We refer the reader to \cite[Section~3.1]{JacobsBook} for details of this definition. An extension of this definition to a more general notion of relation will be given in Section~\ref{gen-rel-lifting}.

\subsection{Coalgebras}
\label{section-coalgebras}
We model state-based, dynamical systems as coalgebras over the category of sets. Given a functor $F : \C \to \C$ on an arbitrary category, an \emph{$F$-coalgebra} is given by a pair $(C,\gamma)$ with $C$ an object of $\C$, used to model the state space, and $\gamma : C \to F C$ a morphism in $\C$, describing the one-step evolution of the system states. Then, a canonical notion of observational equivalence between the states of two $F$-coalgebras is provided by the notion of bisimulation. Of the many, and under the assumption that $F$ preserves weak pullbacks, equivalent definitions of bisimulation (see \cite{JacobsBook} for a detailed account), we recall the one based on relation lifting. This applies to coalgebras over the category of sets (as described below), but also more generally to categories with logical factorisation systems (as described in \cite{JacobsBook}). According to this definition, an \emph{$F$-bisimulation} between coalgebras $(C,\gamma)$ and $(D,\delta)$ over $\Set$ is a $\Rel(F)$-coalgebra:
\[\UseComputerModernTips\xymatrix{
R \ar@{-->}[r] \ar@{>->}[d] & \Rel(F)(R)\ar@{>->}[d]\\
X \times Y \ar[r]_-{\gamma \times \delta} & F(X) \times F(Y)
}\]
In the remainder of this section we sketch a coalgebraic generalisation of a well-known partition refinement algorithm for computing \emph{bisimilarity} (i.e.~the largest bisimulation) on finite-state labelled transition systems \cite{KS90}. For an arbitrary endofunctor $F : \Set \to \Set$ and two finite-state $F$-coalgebras $(C,\gamma)$ and $(D,\delta)$, the generalised algorithm iteratively computes relations $\simeq_i \,\,\subseteq\,\, C \times D$ with $i = 0,1, \ldots$ as follows:
\begin{itemize}
\item $\sim_0 \,=\, C \times D$
\item $\sim_{i+1} \,=\, (\gamma \times \delta)^*(\Rel(F)(\simeq_i))$ for $i = 0,1,\ldots$
\end{itemize}
where $(\gamma \times \delta)^*$ takes a relation $R \subseteq F C \times F D$ to the relation $\{(c,d) \in C \times D \mid (\gamma(c),\delta(d)) \in R \}$. Thus, in the initial approximation $\simeq_0$ of the bisimilarity relation, all states are related, whereas at step $i+1$ two states are related if and only if their one-step observations are suitably related using the relation $\simeq_i$. Bisimilarity between the coalgebras $(C,\gamma)$ and $(D,\delta)$ thus arises as the greatest fixpoint of a monotone operator on the complete lattice of relations between $C$ and $D$, which takes a relation $R \subseteq C \times D$ to the relation $(\gamma \times \delta)^*(\Rel(F)(R))$. A similar characterisation of bisimilarity exists for coalgebras with infinite state spaces, but in this case the fixpoint can not, in general, be reached in a finite number of steps.

The above greatest fixpoint characterisation of bisimilarity is generalised and adapted in Section~\ref{linear-time}, in order to characterise the extent to which a state in a coalgebra with branching can exhibit a linear-time behaviour. There, the two coalgebras in question have different types: the former has branching behaviour and is used to model the system of interest, whereas the latter has linear behaviour only and describes the domain of possible traces.

\subsection{Monads}

In what follows, we use monads $(\T,\eta,\mu)$ on $\Set$ (where $\eta : \Id \Rightarrow \T$ and $\mu : \T \circ \T \Rightarrow \T$ are the \emph{unit} and \emph{multiplication} of $\T$) to capture branching in coalgebraic types. Moreover, we assume that these monads are \emph{strong} and \emph{commutative}, i.e.~they come equipped with a \emph{strength map} $\st_{X,Y} : X \times \T Y \to \T(X \times Y)$ as well as a \emph{double strength map} $\dst_{X,Y} : \T X \times \T Y \to \T(X \times Y)$ for each choice of sets $X,Y$; these maps are natural in $X$ and $Y$, and satisfy coherence conditions w.r.t.~the unit and multiplication of $\T$. We also make direct use of the \emph{swapped strength map} $\st'_{X,Y} : \T X \times Y \to \T(X \times Y)$, obtained from the strength via the \emph{twist map} $\tw_{X,Y} : X \times Y \to Y \times X$:
\[\UseComputerModernTips\xymatrix@+0.3pc{\T X \times Y \ar[r]^-{\tw_{\T X,Y}} & Y \times \T X \ar[r]^-{\st_{Y,X}} & \T(Y \times X) \ar[r]^-{\T \tw_{Y,X}} & \T(X \times Y)}\]
\begin{example}
\label{example-monads}
As examples of monads, we consider:
\begin{enumerate}
\item the \emph{powerset monad} $\Pow : \Set \to \Set$, modelling nondeterministic computations, with unit given by singletons and multiplication given by unions. Its strength and double strength are given by 
\begin{align*}
\st_{X,Y}(x,V) = \{x\} \times V & & 
\dst_{X,Y}(U,V) = U \times V
\end{align*}
for $x\in X$, $U \in \Pow X$ and $V \in \Pow Y$,
\item the \emph{semiring monad} $\T_S : \Set \to \Set$ with $(S,+,0,\bullet,1)$ a semiring, given by
\[\T_S(X) = \{ f : X \to S \mid \sup(f) \text{ is finite} \}\]
with $\sup(f) = \{ x \in X \mid f(x) \ne 0\}$ the \emph{support} of $f$. Its unit and multiplication are given by
\begin{align*}
\eta_X(x)(y) = \begin{cases}1 & \text{ if } y = x \\
0 & \text{ otherwise} \end{cases} & & \mu_X(f \in S^{(S^X)}) = \sum\limits_{g \in \sup(f)}\sum\limits_{x\in \sup(g)} f(g) \bullet g(x)
\end{align*}
while its strength and double strength are given by
\begin{align*}
\st_{X,Y}(x,g)(z,y) = \begin{cases} g(y) & \text{ if } z = x \\
0 & \text{ otherwise}
\end{cases} & & \dst_{X,Y}(f,g)(z,y) = f(z) \bullet g(y)
\end{align*}
for $x \in X$, $f\in \T_S(X)$, $g \in \T_S(Y)$, $z \in X$ and $y \in Y$.
As a concrete example, we will consider the semiring $W = (\mathbb N^\infty,\min,\infty,+,0)$, and use $\T_W$ to model weighted computations.
\item the \emph{sub-probability distribution monad} $\SDist : \Set \to \Set$, modelling probabilistic computations, with unit given by the Dirac distributions (i.e.~$\eta_{X}(x) = (x \mapsto 1)$), and multiplication given by $\mu_X(\Phi) = \sum\limits_{\varphi \in \sup(\Phi)}\sum\limits_{x \in \sup(\varphi)} \Phi(\varphi) * \varphi(x)$, with $*$ denoting multiplication on $[0,1]$. Its strength and double strength are given by
\begin{align*}
\st_{X,Y}(x,\psi)(z,y) = \begin{cases}\psi(y) & \text{if } z = x\\ 0 & \text{otherwise}
\end{cases}  & & 
\dst_{X,Y}(\varphi,\psi)(z,y) = \varphi(z) * \psi(y)
\end{align*}
for $x \in X$, $\varphi \in \SDist(X)$, $\psi \in \SDist(Y)$, $z \in X$ and $y \in Y$.
\end{enumerate}
\end{example}

\section{From Partially Additive, Commutative Monads to Partial Commutative Semirings with Order}
\label{semiring}

Later in this paper we will consider coalgebras whose type is given by the composition of several endofunctors on $\Set$, one of which is a commutative monad $\T : \Set \to \Set$ accounting for the presence of branching in the systems of interest. This section extends results in \cite{Kock2011,CoumansJ2011} to show how to derive a universe of truth values from such a monad. The assumption of loc.\,cit.~concerning the \emph{additivity} of the monad under consideration is here weakened to \emph{partial additivity} (see Definition~\ref{additive}); this allows us to incorporate the sub-probability distribution monad (which is not additive) into our framework. Specifically, we show that any commutative, partially additive monad $\T : \Set \to \Set$ induces a partial commutative semiring structure on the set $\T 1$, with $1=\{*\}$ a final object in $\Set$. We recall that a \emph{commutative semiring} consists of a set $S$ carrying two commutative monoid structures $(+,0)$ and $(\bullet, 1)$, with the latter distributing over the former: $s \bullet 0 = 0$ and $s \bullet (t + u) = s \bullet t + s \bullet u$ for all $s,t,u \in S$. A \emph{partial commutative semiring} is defined similarly, except that $+$ is a partial operation subject to the condition that whenever $t + u$ is defined, so is $s \bullet t + s \bullet u$, and moreover $s \bullet (t + u) = s \bullet t + s \bullet u$. The relevance of a partial commutative semiring structure on the set of truth values will become clear in Sections~\ref{gen-rel-lifting} and \ref{linear-time}.

It follows from results in \cite{CoumansJ2011} that any commutative monad $(\T,\eta,\mu)$ on $\Set$ induces a commutative monoid $(\T(1),\bullet,\eta_1(*))$, with multiplication $\bullet : \T(1) \times \T(1) \to \T(1)$ given by the composition
\[\UseComputerModernTips\xymatrix@+1pc{\T (1) \times \T (1) \ar[r]^-{\dst_{1,1}} & \T(1 \times 1) \ar[r]^-{\T \pi_2} & \T (1)}\]
Alternatively, this multiplication can be defined as the composition
\[\UseComputerModernTips\xymatrix{\T(1) \times \T(1) \ar[r]^-{\st'_{1,1}} & \T(1 \times \T(1)) \ar[r]^-{T \pi_2} & \T^2(1) \ar[r]^-{\mu_1} & \T(1)}\]
or as
\[\UseComputerModernTips\xymatrix{\T(1) \times \T(1) \ar[r]^-{\st_{1,1}} & \T(\T(1) \times 1) \ar[r]^-{T \pi_1} & \T^2(1) \ar[r]^-{\mu_1} & \T(1)}\]
(While the previous two definitions coincide for commutative monads, this is not the case in general.)
\begin{remark}
\label{actions}
The following maps define left and right actions of $(\T(1),\bullet)$ on $\T( X)$:
\[\UseComputerModernTips\xymatrix{
\T(1) \times \T(X) \ar[r]^-{\dst_{1,X}} & \T(1 \times X) \ar[r]^-{\T \pi_2} & \T (X)
} \qquad \qquad \UseComputerModernTips\xymatrix{
\T(X) \times \T(1) \ar[r]^-{\dst_{X,1}} & \T(X \times 1) \ar[r]^-{\T \pi_1} & \T (X)
}\]
\end{remark}

On the other hand, any monad $\T :  \Set \to \Set$ with {$ \T \emptyset = 1$} is such that, for any $X$, $\T X$ has a \emph{zero element} $0 \in \T X$, obtained as $(\T!_X)(*)$. This yields a \emph{zero map $0 : Y \to \T X$} for any $X,Y$, obtained as the composition
\[\UseComputerModernTips\xymatrix{Y \ar[r]^-{!_Y} &  \T \emptyset \ar[r]^-{T !_X} & \T X}\]
with the maps $!_Y : Y \to \T \emptyset$ and $!_X : \emptyset \to X$ arising by finality and initiality, respectively. Now consider the following map:
\begin{equation}
\label{map}
\UseComputerModernTips\xymatrix@+4pc{T(X+Y) \ar[r]^-{\langle \mu_X \circ \T p_1,\mu_Y \circ \T p_2 \rangle} & \T X \times \T Y}
\end{equation}
where $p_1 = [\eta_X,0] : X + Y \to \T X$ and $p_2 = [0,\eta_Y] : X + Y \to \T Y$.
\begin{definition}
\label{additive}
A monad $\T : \Set \to \Set$ is called \emph{additive}\footnote{Additive monads were studied in \cite{Kock2011,CoumansJ2011}.} (\emph{partially additive}) if\, $\T \emptyset = 1$ and the map in (\ref{map}) is an isomorphism (respectively monomorphism).
\end{definition}
 The (partial) inverse of the map $\langle \mu_X \circ \T p_1,\mu_Y \circ \T p_2 \rangle$ can be used to define a (partial) addition on the set $\T X$, given by $\T[1_X,1_X] \circ q_{X,X}$, where $q_{X,X} : \T X \times \T X \to \T(X+X)$ is the (partial) left inverse of $\langle \mu_X \circ \T p_1,\mu_Y \circ \T p_2 \rangle$:
 \[\UseComputerModernTips\xymatrix@+1pc{\T X & \T(X+X) \ar@<+1ex>[rr]^-{\langle \mu_X \circ \T p_1,\mu_Y \circ \T p_2 \rangle} \ar[l]_-{\T[1_X,1_X]} & & \T X \times \T X \ar@<+1ex>@{-->}[ll]^-{q_{X,X}} \ar@<+1ex>@/^3ex/[lll]^-{{+}}}\]
That is, $a + b$ is defined if and only if $(a,b) \in \Im(\langle \mu_X \circ \T p_1,\mu_Y \circ \T p_2 \rangle)$\,\footnote{A similar, but \emph{total}, addition operation is defined in \cite{Kock2011,CoumansJ2011} for additive monads.}.

\cite[Section~5.2]{CoumansJ2011} explores the connection between additive, commutative monads and commutative semirings. The next result provides a generalisation to partially additive, commutative monads and partial commutative semirings. 

The proof of Proposition~\ref{prop-partial} is a slight adaptation of the corresponding proofs in \cite[Section~5.2]{CoumansJ2011}.
\begin{proposition}
\label{prop-partial}
Let $\T$ be a commutative, (partially) additive monad. Then:
\begin{enumerate}
\item $(\T 1,\bullet,\eta_1(*))$ is a commutative monoid.
\item $(\T X,0,+)$ is a (partial) commutative monoid, for each set $X$.
\item\label{3}  $(\T 1,0,+,\bullet,\eta_1(*))$ is a (partial) commutative semiring. 
\end{enumerate}
\end{proposition}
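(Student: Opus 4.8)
The plan is to dispatch part~(1) by appeal to the cited results, to derive part~(2) from the monomorphism property of the comparison map in~(\ref{map}), and to obtain part~(3) by instantiating parts~(1) and~(2) at $X = 1$ and then verifying the two remaining distributivity-style laws. Part~(1) is essentially the discussion preceding the statement: for any commutative monad the map $\bullet : \T 1 \times \T 1 \to \T 1$ built from $\dst_{1,1}$ and $\T\pi_2$ is unital, commutative and associative with unit $\eta_1(*)$, which is exactly \cite[Section~5.2]{CoumansJ2011}, whose argument uses commutativity but not additivity. So the genuine content lies in part~(2) --- associativity of the partial sum --- and in one compatibility calculation for part~(3).

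For part~(2), the first observation is that partial additivity makes the recipe for $+$ well posed: since the map in~(\ref{map}) is monic it has at most one preimage of each $(a,b)$, so $q_{X,X}$ is a genuine partial function and $a + b$ is unambiguously defined precisely when $(a,b) \in \Im(\langle \mu_X \circ \T p_1, \mu_X \circ \T p_2\rangle)$. The unit laws $0 + a = a = a + 0$ are then witnessed by $\T\kappa_2(a)$ and $\T\kappa_1(a)$ (with $\kappa_1,\kappa_2 : X \to X + X$ the coprojections), using $\langle \mu_X\circ\T p_1,\mu_X\circ\T p_2\rangle \circ \T\kappa_2 = \langle 0,\id\rangle$ --- which unfolds to $p_1\circ\kappa_2 = 0$, $p_2\circ\kappa_2 = \eta_X$, the monad law $\mu\circ\T\eta = \id$, and $\mu_X\circ\T(0) = 0$ --- together with $[1_X,1_X]\circ\kappa_i = 1_X$. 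Commutativity is a Kleene equality obtained from naturality of~(\ref{map}) along the swap automorphism $\sigma = [\kappa_2,\kappa_1] : X + X \to X + X$: precomposition with $\T\sigma$ interchanges the two components of the comparison map, so $w$ witnesses $(a,b)$ iff $\T\sigma(w)$ witnesses $(b,a)$, while $[1_X,1_X]\circ\sigma = [1_X,1_X]$ yields $a + b = b + a$. Associativity is the main obstacle: the plan is to introduce the ternary comparison map $\langle \mu_X\circ\T q_1,\mu_X\circ\T q_2,\mu_X\circ\T q_3\rangle : \T(X+X+X) \to (\T X)^3$ (with $q_i$ the evident coprojections-with-zeros), show it is again monic by writing $X + X + X = (X+X)+X$ and presenting it as the composite of~(\ref{map}) for the pair $(X+X,X)$ with $\langle\mu_X\circ\T p_1,\mu_X\circ\T p_2\rangle \times \id_{\T X}$, and then checking by naturality that both $(a+b)+c$ and $a+(b+c)$ equal $\T[1_X,1_X,1_X]$ applied to the unique ternary witness of $(a,b,c)$, whenever that witness exists. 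The genuinely delicate point --- and the only place where partial, as opposed to total, additivity is felt --- is the converse: that definedness of either parenthesised sum forces the ternary witness to exist. Here I would use the auxiliary identity that the unique witness of $(a,b)$ is $\T\kappa_1(a) + \T\kappa_2(b) \in \T(X+X)$ (itself proved by the same naturality calculations, and reminiscent of the inverse of~(\ref{map}) in the additive case), which allows the ternary witness to be reassembled from the witnesses of the two binary sums.

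For part~(3), instantiating parts~(1) and~(2) at $X = 1$ already supplies the commutative monoid $(\T 1,\bullet,\eta_1(*))$ and the partial commutative monoid $(\T 1,0,+)$, so only $s\bullet 0 = 0$ and distributivity remain. For the former, $0 = (\T!_1)(*)$ factors through $\T\emptyset = 1$, so by naturality of $\dst_{1,1}$ the map $s\bullet(-)$ sends $0$ to an element of $\T 1$ that factors through $\T\emptyset = 1$, and $0$ is the only such element. For distributivity, suppose $t + u$ is defined with witness $w \in \T(1+1)$; I would form $s \cdot w := (\T\pi_2)(\dst_{1,1+1}(s,w)) \in \T(1+1)$, the left action of $\T 1$ on $\T(1+1)$ of Remark~\ref{actions}, and prove --- this is the one real diagram chase, using coherence of $\dst$ with $\mu$ and absorption of $0$ by the action --- that $\mu_1\circ\T p_i$ intertwines $s\cdot(-)$ with $s\bullet(-)$; since $\mu_1\circ\T p_1$ and $\mu_1\circ\T p_2$ send $w$ to $t$ and $u$, this makes $s\cdot w$ a witness for $(s\bullet t, s\bullet u)$, so $s\bullet t + s\bullet u$ is defined. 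Finally $\T[1_1,1_1](s\cdot w) = s\bullet\T[1_1,1_1](w) = s\bullet(t+u)$ follows from naturality of $\dst_{1,1}$ in its second argument together with the fact that any two maps $1\times(1+1)\to 1$ coincide, which gives the distributivity law $s\bullet(t+u) = s\bullet t + s\bullet u$. The main obstacles overall are thus the associativity argument in part~(2) and this intertwining lemma in part~(3); both are routine in spirit but need careful bookkeeping with the strength and double-strength coherence conditions, and it is worth noting that distributivity is, by design, one-directional in the partial setting (``whenever $t+u$ is defined, so is $s\bullet t + s\bullet u$'').
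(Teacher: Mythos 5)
Your treatment of items (1) and (3) is essentially the paper's own argument. Item (1) is indeed just the commutative-monoid structure on $\T 1$ from \cite{CoumansJ2011}, needing commutativity but not additivity. For item (3), your plan --- act on the witness $w \in \T(1+1)$ of $t+u$ by the scalar $s$ via the action of Remark~\ref{actions}, prove that the comparison map $\delta = \langle \mu_1 \circ \T p_1, \mu_1 \circ \T p_2\rangle$ intertwines this action with componentwise multiplication (using the coherence of $\dst$ with $\mu$, i.e.\ the commutation of $\delta$ with strength), and then transfer along $\T[1_1,1_1]$ by naturality of the action --- is exactly the diagram chase in the paper's proof sketch, up to your using the left action and left distributivity where the paper uses the right action and right distributivity (equivalent, since $\bullet$ is commutative). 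Your annihilation argument $s \bullet 0 = 0$ via factorisation through $\T\emptyset = 1$ is also fine, and you correctly note that distributivity is one-directional in the partial setting.

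The gap is in item (2), at precisely the point you flag as delicate. Your auxiliary identity (the witness of a \emph{defined} sum $a+b$ equals $\T\kappa_1(a) + \T\kappa_2(b)$) is provable by the naturality computations you indicate, but it only goes in the harmless direction; the step ``the ternary witness can be reassembled from the witnesses of the two binary sums'' is asserted, not proved, and it does not follow from monicity of the map in (\ref{map}). Concretely: from witnesses $w_1$ with $\delta(w_1)=(a,b)$ and $w_2$ with $\delta(w_2)=(a+b,c)$ you need the pair $(w_1,c)$ itself to lie in the image of $\delta_{X+X,X}$, i.e.\ you need definedness of sums to be \emph{reflected} along $\T[1_X,1_X]$ (from $\T[1_X,1_X](w_1)+c$ defined to ``$w_1 + c$'' defined at the finer type). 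Monicity gives uniqueness of witnesses, never existence, and no naturality-style construction can produce the required element of $\T(X+X+X)$ from $w_1$ and $w_2$: for $\T = \SDist$, any composite built from $\dst$, $\st$, $\T f$ and $\mu$ applied to $w_1, w_2$ has total mass at most $\mathrm{mass}(w_1)\cdot\mathrm{mass}(w_2)$, which can be strictly smaller than the mass $\mathrm{mass}(a)+\mathrm{mass}(b)+\mathrm{mass}(c)$ a ternary witness must have; associativity does hold for $\SDist$, but only because membership in $\Im(\delta)$ there depends solely on total mass, a property of that particular monad rather than of partial additivity. So this step needs a genuinely different argument (or an extra hypothesis on $\T$). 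To be fair, the paper's own proof sketch is silent on associativity --- it defers items (1)--(2) to a ``slight adaptation'' of \cite{CoumansJ2011}, where the monad is additive, $\delta$ is an isomorphism, and the problem disappears --- but since your proposal explicitly undertakes to prove (2), the reassembly claim is the one place where the proof, as written, does not go through.
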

\begin{proof}[Proof (Sketch)]
The commutativity of the following diagram lies at the heart of the proof of item \ref{3}:
 \[\UseComputerModernTips\xymatrix{
 \T 1 \times \T 1 \ar[dd]_-{\bullet} & & \T (1+1) \times \T 1 \ar[ll]_-{\T[1_X,1_X] \times 1_{\T 1}} \ar[dd]_-{a_{\T(1+1)}} \ar@<+1ex>[r]^-{\delta \times 1_{\T 1} }& (\T 1 \times \T 1) \times \T 1 \ar@<+1ex>@{-->}[l]^-{q_{1,1} \times 1_{\T 1}} \ar[d]^-{\langle \pi_1 \times \pi_2,\pi_2 \times \pi_2 \rangle} \\
 & & & (\T 1 \times \T 1) \times (\T 1 \times \T 1) \ar[d]^-{\bullet \,\times\, \bullet}\\
 \T 1 & & \T(1 + 1) \ar@<+1ex>[r]^-{\delta} \ar[ll]^-{\T[1_X,1_X]} & \T 1 \times \T 1\ar@<+1ex>@{-->}[l]^-{q_{1,1}}
 }\]
 where $a_{\T X} : \T X \times \T 1 \to \T X$ is the right action from Remark~\ref{actions}, and $\delta$ is the map $\langle \mu_1 \circ \T p_1,\mu_1 \circ \T p_2 \rangle$ used in the definition of $+$ on $\T 1$. The composition $\bullet \circ (\T[1_X,1_X] \times 1_{\T 1}) \circ (q_{1,1} \times 1_{\T 1})$ captures the computation of $(a + b) \bullet c$, whereas the composition $\T[1_X,1_X] \circ q_{1,1} \circ (\bullet \times \bullet) \circ \langle \pi_1 \times \pi_2,\pi_2 \times \pi_2 \rangle$ captures the computation $a \bullet c + b \bullet c$, with $a,b,c \in \T 1$. The fact that $\delta$ commutes with the strength map (by (iv) of \cite[Lemma~15]{CoumansJ2011}), together with $a_{\T(1+1)}$ and $\bullet$ being essentially given by the double strength maps $\dst_{1+1,1}$ and $\dst_{1,1}$, yields $(\bullet \times \bullet) \circ \langle \pi_1 \times \pi_2,\pi_2 \times \pi_2 \rangle \circ (\delta \times 1_{\T 1}) = \delta \circ a_{\T(1+1)}$, that is, commutativity (via the plain arrows) of the right side of the above diagram. This immediately results in $a \bullet c + b \bullet c$ being defined whenever $a + b$ is defined, and hence in the commutativity of the right side of the diagram also via the dashed arrows. This, combined with the commutativity of the left side of the diagram (which is simply naturality of the right action $a$), gives  $(a + b) \bullet c = a \bullet c + b \bullet c$ whenever $a+b$ is defined.
\end{proof}
\begin{example}
\label{example-semirings}
For the monads in Example~\ref{example-monads}, one obtains the commutative semirings $(\{\bot,\top\},\vee,\bot,\wedge,\top)$ when {$\T = \Pow$}, $({\mathbb N}^\infty,\min,\infty,+,0)$ when {$\T = \TW$}\,\footnote{This is sometimes called the \emph{tropical semiring}.}, and the partial commutative semiring $([0,1],+,0,*,1)$ when $\T = \SDist$ (where in the latter case $a + b$ is defined if and only if $a + b \le 1$).
\end{example}

\section{Generalised Relations and Relation Lifting}
\label{gen-rel-lifting}

This section introduces generalised relations valued in a partial commutative semiring, and shows how to lift polynomial endofunctors on $\Set$ to the category of generalised relations. We begin by fixing a partial commutative semiring $(S,+,0,\bullet,1)$, and noting that the partial monoid $(S,+,0)$ can be used to define a preorder relation on $S$ as follows:
\[x \sqsubseteq y ~~~\text{if and only if}~~~ \text{there exists } z \in S \text{ such that }x + z = y\]
for $x,y \in S$. It is then straightforward to show (using the definition of a partial commutative semiring) that the preorder $\sqsubseteq$ has $0 \in S$ as bottom element, and is preserved by $\bullet$ in each argument. Proper (i.e.~not partial) semirings where the preorder $\sqsubseteq$ is a partial order are called \emph{naturally ordered} \cite{EsikK07}. We here extend this terminology to partial semirings.

\begin{example}
\label{example-orders}
For the monads in Example~\ref{example-monads}, the preorders associated to the induced partial semirings (see Example~\ref{example-semirings}) are all partial orders: $\le$ on $\{\bot,\top\}$ for $\T = {\Pow}$, $\le$ on $[0,1]$ for $\T = {\SDist}$, and $\ge$ on ${\mathbb N}^\infty$ for $\T = \TW$.
\end{example}

We let $\SRel$ denote the category\footnote{To keep notation simple, the dependency on $S$ is left implicit.} with objects given by triples $(X,Y,R)$, where $R : X \times Y \to S$ is a function defining a \emph{multi-valued relation} (or \emph{$S$-relation}), and with arrows from $(X,Y,R)$ to $(X',Y',R')$ given by pairs of functions $(f,g)$ as below, such that $R \sqsubseteq R' \circ (f \times g)$:
\[\UseComputerModernTips\xymatrix{X \times Y \ar@{}[dr]|-{\sqsubseteq}\ar[r]^-{f \times g} \ar[d]_-{R} & X' \times Y' \ar[d]^-{R'}\\ S \ar@{=}[r] & S}\]
Here, the order $\sqsubseteq$ on $S$ has been extended pointwise to $S$-relations with the same carrier.

We write $\Rel_{X,Y}$ for the \emph{fibre over $(X,Y)$}, that is, the full subcategory of $\Rel$ whose objects are $S$-relations over $X \times Y$ and whose arrows are given by $(1_X,1_Y)$. It is straightforward to check that the functor $q : \Rel \to \Set \times \Set$ taking $(X,Y,R)$ to $(X,Y)$ defines a fibration: the reindexing functor $(f,g)^* : \Rel_{X',Y'} \to \Rel_{X,Y}$ takes $R' : X' \times Y' \to S$ to $R' \circ (f \times g) : X \times Y \to S$. 

We now proceed to generalising relation lifting to $S$-relations.

\begin{definition}
\label{def-gen-rel-lifting}
Let $F : \Set \to \Set$. A \emph{relation lifting of $F$} is a functor\footnote{Given the definition of the fibration $q$, such a functor is automatically a morphism of fibrations.} $\Gamma : \SRel \to \SRel$ such that $q \circ \Gamma = (F \times F) \circ q$:
\[\UseComputerModernTips\xymatrix{
\SRel \ar[d]_-{q} \ar[r]^-{\Gamma} & \SRel \ar[d]^-{q} \\
\Set \times \Set \ar[r]_-{F \times F} & \Set \times \Set}\]
\end{definition}
We immediately note a fundamental difference compared to standard relation lifting as defined in Section~\ref{rel-lifting}. While in the case of standard relations each functor admits exactly one lifting, Definition~\ref{def-gen-rel-lifting} implies neither the existence nor the uniqueness of a lifting. We defer the study of a canonical lifting (similar to $\Rel(F)$ in the case of standard relations) to future work, and show how to define a relation lifting of $F$ in the case when $F$ is a polynomial functor. To this end, we make the additional assumption that the unit $1$ of the semiring multiplication is a top element (which we also write as $\top$) for the preorder $\sqsubseteq$. Recall that $\sqsubseteq$ also has a bottom element (which we will sometimes denote by $\bot$), given by the unit $0$ of the (partial) semiring addition. The definition of the relation lifting of a polynomial functor $F$ is by structural induction on $F$ and makes use of the semiring structure on $S$:
\begin{itemize}
\item If $F = \Id$, $\Rel(F)$ takes an $S$-relation to itself.
\item If $F = C$, $\Rel(F)$ takes an $S$-relation to the equality relation $\Eq(C) : C \times C \to S$ given by
\[\Eq_C(c,c') ~=~ \begin{cases} \top \text{ if }c = c' \\
\bot \text{ otherwise} \end{cases}\]
\item If $F = F_1 \times F_2$,  $\Rel(F)$ takes an $S$-relation $R : X \times Y \to S$ to:
\[\!\!\!\!\!\!\UseComputerModernTips\xymatrix@-1.1pc{(F_1 X \times F_2 X) \times (F_1 Y \times F_2 Y) \ar[rrr]^-{\langle \pi_1 \times \pi_1,\pi_2 \times \pi_2 \rangle} & & & (F_1 X \times F_1 Y) \times (F_2 X \times F_2 Y) \ar[rrrrr]^-{\Rel(F_1)(R) \times \Rel(F_2)(R)} & & & & & S \times S \ar[r]^-{\bullet} & S}\]
The functoriality of this definition follows from the preservation of $\sqsubseteq$ by $\bullet$ (see Section~\ref{semiring}).
\item if $F = F_1 + F_2$, $\Rel(F)(R) : (F_1 X + F_2 X) \times (F_1 Y + F_2 Y ) \to S$ is defined by case analysis:
\begin{align*}
\Rel(F)(R)(\iota_i(u),\iota_j(v)) & ~=~ \begin{cases} \Rel(F_i)(R)(u,v) & \text{ if } i = j\\
\bot & \text{ otherwise} \end{cases}
\end{align*}
for $i,j \in \{1,2\}$, $u \in F_i X$ and $v \in F_j Y$. This definition generalises straightforwardly from binary to set-indexed coproducts.
\end{itemize}

\begin{remark}
A more general definition of relation lifting, which applies to arbitrary functors on $\Set$, is outside the scope of this paper. We note in passing that such a relation lifting could be defined by starting from a \emph{generalised predicate lifting} $\delta : F \circ \PS_0 \Rightarrow \PS_0 \circ F$ for the functor $F$, similar to the predicate liftings used in the work on coalgebraic modal logic \cite{Pattinson03}. Here, the contravariant functor $\PS_0 : \Set \to \Set^\op$ takes a set $X$ to the hom-set $\Set(X,S)$. Future work will also investigate the relevance of the results in \cite{Ghani2011,Ghani2012} to a general definition of relation lifting in our setting. Specifically, the work in loc.\,cit.~shows how to construct truth-preserving predicate liftings and equality-preserving relation liftings for arbitrary functors on the base category of a \emph{Lawvere fibration}, to the total category of that fibration.
\end{remark}
For the remainder of this paper, we take $(S,+,0,\bullet,1)$ to be the partial semiring derived in Section~\ref{semiring} from a commutative, partially additive monad $\T$, and we view $S$ as the set of truth values.
In the case of the powerset monad, this corresponds to the standard view of relations as subsets, whereas in the case of the sub-probability distribution monad, this results in relations given by valuations in the interval $[0,1]$.

\begin{example}
Let $F: \Set \to \Set$ be given by $F X = 1 + A \times X$, with $A$ a set (of labels), and let $(S,+,0,\bullet,1)$ be the partial semiring with carrier $\T 1$ defined in Section~\ref{semiring}.
\begin{itemize}
\item For $\T = \Pow$, $\Rel(F)$ takes a (standard) relation $R \subseteq X \times Y$ to the relation
\[\{(\iota_1(*),\iota_1(*)\} \cup \{((a,x),(a,y)) \mid a \in A, (x,y) \in R \}\]
\item For $\T = \SDist$, $\Rel(F)$ takes $R : X \times Y \to [0,1]$ to the relation $R' : F X \times F Y \to [0,1]$ given by
\[ R'(\iota_1(*),\iota_1(*)) = 1 ~\qquad~
R'((a,x),(a,y)) = R(x,y) ~\qquad~
R'(u,v) = 0 ~\text{ in all other cases}
\]
\item For $\T = \TW$, $\Rel(F)$ takes $R : X \times Y \to \mathbb N^\infty$ to the relation $R' : F X \times F Y \to \mathbb N^\infty$ given by
\[
R'(\iota_1(*),\iota_1(*)) = 0 ~\qquad~
R'((a,x),(a,y)) = R(x,y) ~\qquad~
R'(u,v) = \infty ~\text{ in all other cases}
\]
\end{itemize}
\end{example}

\section{From Bisimulation to Traces}
\label{linear-time}

Throughout this section we fix a commutative, partially additive monad $\T : \Set \to \Set$ and assume, as in the previous section, that the natural preorder $\sqsubseteq$ induced by the partial commutative semiring obtained in Section~\ref{semiring} has the multiplication unit $\eta_1(*) \in \T 1$ as top element. Furthermore, we assume that this preorder is an \emph{$\omega^{\op}$-chain complete} partial order, where $\omega^{\op}$-chain completeness amounts to  any decreasing chain $x_1 \sqsupseteq x_2 \sqsupseteq \ldots$ having a greatest lower bound $\sqcap_{i \in \omega} x_i$. These assumptions are clearly satisfied by the orders in Example~\ref{example-orders}. 

We now show how combining the liftings of polynomial functors to the category of generalised relations valued in the partial semiring $\T 1$ (as defined in Section~\ref{gen-rel-lifting}) with so-called \emph{extension liftings} which arise canonically from the monad $\T$, can be used to give an account of the linear-time behaviour of a state in a coalgebra with branching. The type of such a coalgebra can be any composition involving polynomial endofunctors and the branching monad $\T$, although compositions of type $\T \circ F$, $G \circ \T$ and $G \circ \T \circ F$ with $F$ and $G$ polynomial endofunctors are particularly emphasised in what follows.

We begin with some informal motivation. When $\Rel$ is the standard category of binary relations, recall from Section~\ref{section-coalgebras} that an $F$-bisimulation is simply a $\Rel(F)$-coalgebra, and that the largest $F$-bisimulation between two $F$-coalgebras $(C,\gamma)$ and $(D,\delta)$ can be obtained as the greatest fixpoint of the monotone operator  on $\Rel_{C \times D}$ which takes a relation $R$ to the relation $(\gamma \times \delta)^*( \Rel(F)(R))$. Generalising the notion of $F$-bisimulation from standard relations to $\T 1$-relations makes little sense when the systems of interest are $F$-coalgebras. However, when considering say, coalgebras of type $\T \circ F$, it turns out that liftings of $F$ to the category of $\T 1$-relations (as defined in Section~\ref{gen-rel-lifting}) can be used to describe the \emph{linear-time behaviour} of states in such a coalgebra, when combined with suitable liftings of $\T$ to the same category of relations. To see why, let us consider labelled transition systems viewed as coalgebras of type $\Pow(1 + A \times \Id)$. In such a coalgebra $\gamma : C \to \Pow(1 + A \times C)$, explicit termination is modelled via transitions $c \to \iota_1(*)$, whereas deadlock (absence of a transition) is modelled as $\gamma(c) = \emptyset$. In this case, $\Rel(\Pow) \circ \Rel(1 + A \times \Id)$ is naturally isomorphic to $\Rel(\Pow(1 + A \times \Id))$\,\footnote{A similar observation holds more generally for $\Pow \circ F$ with $F$ a polynomial endofunctor. In general, only a natural transformation $\Rel(F \circ G) \Rightarrow \Rel(F) \circ \Rel(G)$ exists, see \cite[Exercise~4.4.6]{JacobsBook}.}, and takes a relation $R \subseteq X \times Y$ to the relation $R' \subseteq \Pow(1 + A \times X) \times \Pow(1 + A \times Y)$ given by
\[(U,V) \in R' ~~~\text{ if and only if }~~~ \begin{cases}
\text{if } \iota_1(*) \in U \text{ then } \iota_1(*) \in V , \text{ and conversely}\\
\text{if } (a,x) \in U \text{ then there exists } (a,y) \in V \text{ with } (x,y) \in R, \text{ and conversely}
\end{cases}
\]
Thus, the largest $\Pow(1 + A \times \Id)$-bisimulation between two coalgebras $(C,\gamma)$ and $(D,\delta)$ can be computed as the greatest fixpoint of the operator on $\Rel_{C,D}$ obtained as the composition
\begin{equation}
\label{el}
\UseComputerModernTips\xymatrix@+0.6pc{
R \subseteq C \times D \ar@{|->}[r]^-{\Rel(F)} & R_1 \subseteq F C \times F D \ar@{|->}[r]^-{\Rel(\Pow)} & R_2 \subseteq \Pow(F C) \times \Pow (F D) \ar@{|->}[r]^-{(\gamma \times \delta)^*} & R'\subseteq C \times D}
\end{equation}
where $F = 1 + A \times \Id$. Note first that $\Rel(\Pow)$ (defined in Section~\ref{rel-lifting} for an arbitrary endofunctor on $\Set$) takes a relation $R \subseteq X \times Y$ to the relation $R' \subseteq \Pow(X) \times \Pow(Y)$ given by
\[(U,V) \in R' \text{ ~if and only if~ for all } x \in U \text{ there exists } y \in V \text{ with } (x,y) \in R, \text{ and conversely}\]
Now consider the effect of replacing $\Rel(\Pow)$ in (\ref{el}) with the lifting $L : \Rel \to \Rel$ that takes a relation $R \subseteq X \times Y$ to the relation $R' \subseteq \Pow(X) \times Y$ given by
\[(U,y) \in R' \text{ ~if and only if~ there exists } x \in U \text{ with } (x,y) \in R\]
To do so, we must change the type of the coalgebra $(D,\delta)$ from $\Pow \circ F$ to just $F$. A closer look at the resulting operator on $\Rel_{C,D}$ reveals that it can be used to test for the existence of a matching trace: each state of the $F$-coalgebra $(D,\delta)$ can be associated a \emph{maximal trace}, i.e.~an element of the final $F$-coalgebra, by finality. In particular, when $F = 1 + A \times \Id$, maximal traces are either finite or infinite sequences of elements of $A$. Thus, the greatest fixpoint of the newly defined operator on $\Rel_{C \times D}$ corresponds to the relation on $C \times D$ given by
\begin{eqnarray*}c \ni_\tr d \text{ ~if and only if~  there exists a sequence of choices of transitions starting from } c \in C \text{ that leads to}\\
\qquad \qquad \quad \text{ exactly the same maximal trace (element of $A^* \cup A^\omega)$ as the single trace of } d \in D
\end{eqnarray*}
This relation models the ability of the state $c$ to exhibit the same trace as that of $d$.

The remainder of this section formalises the above intuitions, and generalises them to arbitrary monads $\T$ and polynomial endofunctors $F$, as well as to arbitrary compositions involving the monad $\T$ and polynomial endofunctors. We begin by restricting attention to coalgebras of type $\T \circ F$, with the monad $\T$ capturing branching and the endofunctor $F$ describing the structure of individual transitions. In this case it is natural to view the elements of the final $F$-coalgebra as possible \emph{linear-time} observable behaviours of states in $\T \circ F$-coalgebras. Similarly to the above discussion, we let $(C,\gamma)$ and $(D,\delta)$ denote a $\T \circ F$-coalgebra and respectively an $F$-coalgebra. The lifting of $F$ to $\T 1$-relations will be used as part of an operator on $\Rel_{C,D}$. In order to generalise the lifting $L$ above to arbitrary monads $\T$, we recall the following result from \cite{Kock12}, which assumes a strong monad $\T$ on a cartesian closed category.

\begin{proposition}[{\cite[Proposition~4.1]{Kock12}}]
\label{prop-kock}
Let $(B,\beta)$ be a $\T$-algebra. For any $f : X \times Y \to B$, there exists a unique $1$-linear $\overline{f} : \T X \times Y \to B$ making the following triangle commute:
\[\UseComputerModernTips\xymatrix{
\T X \times Y \ar[r]^-{\overline{f}} & B \\
X \times Y \ar[u]^-{\eta_X \times 1_Y} \ar[ur]_-{f}
}\]
\end{proposition}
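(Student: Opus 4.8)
The plan is to write down $\overline{f}$ explicitly and then verify the triangle, the $1$-linearity, and uniqueness by short diagram chases that use only the strength-coherence laws of $\T$ and the two equations defining the $\T$-algebra $(B,\beta)$. Recall that a map $h\colon\T X\times Y\to B$ is \emph{$1$-linear} when it is an algebra morphism in its first argument, i.e.\ $h\circ(\mu_X\times 1_Y)=\beta\circ\T h\circ\st'_{\T X,Y}$; intuitively the statement says that $\T(-)\times Y$ carries the universal ``freely add $\T$-branching in the first coordinate'' structure on $X\times Y$ relative to $\T$-algebras. I would take
\[
\overline{f}\;\colon\;\T X\times Y\xrightarrow{\;\st'_{X,Y}\;}\T(X\times Y)\xrightarrow{\;\T f\;}\T B\xrightarrow{\;\beta\;}B ,
\]
that is, apply $f$ underneath the branching and then collapse along the algebra structure.

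First I would check the triangle. Precomposing $\overline{f}$ with $\eta_X\times 1_Y$ and using the unit coherence $\st'_{X,Y}\circ(\eta_X\times 1_Y)=\eta_{X\times Y}$ (the swapped form of the $\eta$-law for the strength), naturality of $\eta$ along $f$, and the unit law $\beta\circ\eta_B=1_B$, one gets $\overline{f}\circ(\eta_X\times 1_Y)=\beta\circ\T f\circ\eta_{X\times Y}=\beta\circ\eta_B\circ f=f$. For $1$-linearity, I would start from $\overline{f}\circ(\mu_X\times 1_Y)=\beta\circ\T f\circ\st'_{X,Y}\circ(\mu_X\times 1_Y)$, rewrite $\st'_{X,Y}\circ(\mu_X\times 1_Y)$ as $\mu_{X\times Y}\circ\T\st'_{X,Y}\circ\st'_{\T X,Y}$ (the swapped form of the $\mu$-coherence law for the strength), slide $\T f$ past $\mu_{X\times Y}$ using naturality of $\mu$, and replace $\beta\circ\mu_B$ by $\beta\circ\T\beta$ using the associativity law of the algebra. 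Re-grouping the resulting composite yields $\beta\circ\T(\beta\circ\T f\circ\st'_{X,Y})\circ\st'_{\T X,Y}=\beta\circ\T\overline{f}\circ\st'_{\T X,Y}$, which is exactly the $1$-linearity equation.

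For uniqueness, suppose $g\colon\T X\times Y\to B$ is $1$-linear with $g\circ(\eta_X\times 1_Y)=f$. I would precompose the linearity equation $g\circ(\mu_X\times 1_Y)=\beta\circ\T g\circ\st'_{\T X,Y}$ with $\T\eta_X\times 1_Y$. On the left, $\mu_X\circ\T\eta_X=1_{\T X}$ collapses the composite to $g$. On the right, naturality of $\st'$ in the first variable turns $\st'_{\T X,Y}\circ(\T\eta_X\times 1_Y)$ into $\T(\eta_X\times 1_Y)\circ\st'_{X,Y}$, so the right-hand side becomes $\beta\circ\T\bigl(g\circ(\eta_X\times 1_Y)\bigr)\circ\st'_{X,Y}=\beta\circ\T f\circ\st'_{X,Y}=\overline{f}$; hence $g=\overline{f}$. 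I do not expect a real obstacle: the argument is a routine consequence of the coherence axioms of a strong monad together with the $\T$-algebra laws. The only point requiring care is to invoke the correct \emph{swapped} forms of the $\eta$- and $\mu$-coherence laws for the strength (obtained from the standard ones by conjugation with the twist map, exactly as $\st'$ itself is built in the preliminaries), and to keep track of which naturality square of $\st'$ is used at each step.
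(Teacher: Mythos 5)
Your proof is correct, and it follows the standard route: the paper itself imports this result from Kock (Proposition~4.1 of \cite{Kock12}) without reproving it, and the explicit composite you take, $\overline{f}=\beta\circ\T f\circ\st'_{X,Y}$, is exactly the formula the paper records in the remark following the definition of the extension lifting (there specialised to $B=\T 1$, $\beta=\mu_1$). The verification of the triangle, $1$-linearity with respect to $(\T X,\mu_X)$, and uniqueness via precomposition with $\T\eta_X\times 1_Y$ uses only the (swapped) strength coherence laws and the algebra equations, as it should, so there is nothing to add.
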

In the above, \emph{$1$-linearity} is linearity in the first variable. More precisely, for $\T$-algebras $(A,\alpha)$ and $(B,\beta)$, a map $f : A \times Y \to B$ is called \emph{$1$-linear} if the following diagram commutes:
\[\UseComputerModernTips\xymatrix{
\T(A) \times Y \ar[r]^-{\st'_{A,Y}} \ar[d]_-{\alpha \times 1_Y} & \T(A \times Y) \ar[r]^-{\T(f)} & \T(B) \ar[d]^-{\beta}\\
A \times Y \ar[rr]_-{f} & & B
}\]
Clearly $1$-linearity should be expected of the lifting $L(R) : \T X \times Y \to \T1$ of a relation $R : X \times Y \to \T 1$, as this amounts to $L(R)$ commuting with the $\T$-algebra structures $(\T X,\mu_X)$ and $(\T 1,\mu_1)$. Given this, the diagram of Proposition~\ref{prop-kock} forces the definition of the generalised lifting.

\begin{definition}
The \emph{extension lifting} $L_\T : \Rel \to \Rel$ is the functor taking a relation $R : X \times Y \to \T1$ to its unique $1$-linear extension $\overline{R} : \T X \times Y \to \T1$.
\end{definition}

\begin{remark}
It follows from \cite{Kock12} that a direct definition of the relation $\overline{R} : \T X \times Y \to \T 1$ is as the composition
\[\UseComputerModernTips\xymatrix@+0.3pc{
\T X \times Y \ar[r]^-{\st'_{X,Y}} & \T(X \times Y) \ar[r]^-{\T (R)} & \T^2 1 \ar[r]^-{\mu_1} & \T 1
}\]
This also yields functoriality of $L_\T$, which follows from the functoriality of its restriction to each fibre category $\Rel_{X,Y}$, as proved next.
\end{remark}

\begin{proposition}
\label{prop-functoriality}
The mapping $R \in \Rel_{X,Y} \mapsto \overline{R} \in \Rel_{\T X,Y}$ is functorial.
\end{proposition}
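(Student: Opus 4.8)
The plan is to reduce functoriality to a monotonicity statement. Both fibres $\Rel_{X,Y}$ and $\Rel_{\T X,Y}$ are thin categories: there is at most one arrow between two objects, namely the pair $(1,1)$, and it exists precisely when the source is pointwise $\sqsubseteq$ the target. Hence preservation of identities and composites is automatic, and it suffices to show that $R \sqsubseteq R'$ in $\Rel_{X,Y}$ implies $\overline R \sqsubseteq \overline{R'}$ in $\Rel_{\T X,Y}$. Throughout I would use the direct description $\overline R = \mu_1 \circ \T(R) \circ \st'_{X,Y}$ from the remark preceding the definition of $L_\T$.

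The heart of the argument is a lemma asserting that the extension lifting preserves the pointwise partial addition of relations: whenever $R_1 + R_2$ is defined pointwise, $\overline{R_1} + \overline{R_2}$ is defined pointwise and equals $\overline{R_1 + R_2}$. To prove it I would write $\delta := \langle \mu_1 \circ \T p_1, \mu_1 \circ \T p_2\rangle : \T(1+1) \to \T 1 \times \T 1$ for the summability map of~(\ref{map}) at $X = Y = 1$, recalling that $\delta$ is a monomorphism, that $q_{1,1}$ is its left inverse, and that $+$ on $\T 1$ is $\T[\id_1,\id_1]\circ q_{1,1}$. Setting $p_i^* := \mu_1 \circ \T p_i$, I would note $\delta = \langle p_1^*,p_2^*\rangle$ and that each $p_i^*$ is a $\T$-algebra morphism between the free algebras $(\T(1+1),\mu_{1+1})$ and $(\T 1,\mu_1)$, i.e.\ $p_i^* \circ \mu_{1+1} = \mu_1 \circ \T(p_i^*)$. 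Pointwise summability of $R_1,R_2$ says $\langle R_1,R_2\rangle$ factors through $\delta$, say as $\delta \circ W$ with $W := q_{1,1}\circ\langle R_1,R_2\rangle$, so that $R_i = p_i^* \circ W$ and $R_1 + R_2 = \T[\id_1,\id_1]\circ W$.

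With $\overline W := \mu_{1+1}\circ\T W\circ\st'_{X,Y}$, the lemma follows from two computations. First, naturality of $\mu$ along the codiagonal $[\id_1,\id_1]:1+1 \to 1$ gives $\mu_1\circ\T^2[\id_1,\id_1] = \T[\id_1,\id_1]\circ\mu_{1+1}$, whence $\overline{R_1 + R_2} = \mu_1\circ\T(\T[\id_1,\id_1]\circ W)\circ\st'_{X,Y} = \T[\id_1,\id_1]\circ\overline W$. Second, since $p_i^*$ is an algebra morphism, $p_i^*\circ\overline W = \mu_1\circ\T(p_i^*\circ W)\circ\st'_{X,Y} = \mu_1\circ\T(R_i)\circ\st'_{X,Y} = \overline{R_i}$, so $\langle\overline{R_1},\overline{R_2}\rangle = \delta\circ\overline W$ factors through $\delta$; consequently $\overline{R_1}+\overline{R_2}$ is pointwise defined and equals $\T[\id_1,\id_1]\circ q_{1,1}\circ\delta\circ\overline W = \T[\id_1,\id_1]\circ\overline W$, matching $\overline{R_1+R_2}$. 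Given the lemma, monotonicity is immediate: for $R\sqsubseteq R'$, by the axiom of choice pick $R'':X\times Y\to\T 1$ with $R(x,y) + R''(x,y) = R'(x,y)$ for all $(x,y)$, i.e.\ $R + R'' = R'$ pointwise; then $\overline{R'} = \overline R + \overline{R''}$, so $\overline R(t,y)\sqsubseteq\overline{R'}(t,y)$ for every $(t,y)\in\T X\times Y$, that is $\overline R\sqsubseteq\overline{R'}$.

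The step I expect to be the main obstacle is the identity $\langle\overline{R_1},\overline{R_2}\rangle = \delta\circ\overline W$, where the interplay of the swapped strength $\st'$, the multiplication $\mu$ and the summability map $\delta$ is packaged; the realisation that makes it go through is that $\delta$ is exactly the pairing of the Kleisli extensions $p_1^*,p_2^*$ and that Kleisli extensions are morphisms of the induced free algebras. (I would also record the harmless checks that $\overline W$ is well defined by the displayed formula, and that preservation of the zero relation plays no role, monotonicity already following from preservation of $+$.)
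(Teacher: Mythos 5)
Your proof is correct and follows essentially the same route as the paper's: functoriality over the thin fibres reduces to monotonicity, which is obtained from the fact that the extension $R \mapsto \mu_1 \circ \T(R) \circ \st'_{X,Y}$ preserves the pointwise partial sum, this in turn resting on the compatibility of the summability map $\delta = \langle \mu_1 \circ \T p_1, \mu_1 \circ \T p_2 \rangle$ (and hence of $+$) with the monad multiplication. The only real difference is presentational: where the paper imports that compatibility from \cite[Lemma~15\,(iii)]{CoumansJ2011} and works with the partial maps $q_{1,1}$, $\T q_{1,1}$ via the dashed arrows of its diagram, you derive it directly from the monad laws (the maps $p_i^\ast = \mu_1 \circ \T p_i$ are morphisms of free $\T$-algebras) and keep all maps total by factoring $\langle R_1, R_2 \rangle = \delta \circ W$, which is a clean, self-contained rendering of the same argument.
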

\begin{proof}[Proof (Sketch)]
Let $R,R' \in \Rel_{X,Y}$ be such that $R \sqsubseteq R'$. Hence, there exists $S \in \Rel_{X,Y}$ such that $R + S = R'$ (pointwise). To show that $\overline{R} \sqsubseteq \overline{R'}$, it suffices to show that $\mu_1 \circ \T(R) \sqsubseteq \mu_1 \circ \T(R')$ (pointwise). To this end, we note that commutativity of the map $\delta$ with the monad multiplication, proved in \cite[Lemma~15\,(iii)]{CoumansJ2011} and captured by the commutativity of the lower diagram below (via the plain arrows)
\[\UseComputerModernTips\xymatrix{
\T^2 1 \ar[r]^-{\mu_1} & \T 1\\
\T^2(1 + 1) \ar[r]^-{\mu_{1 + 1}} \ar@<+1ex>[d]^-{\T \delta} \ar[u]^-{\T^2 !} & \T(1 + 1) \ar@<-1ex>[dd]_-{\delta} \ar[u]^-{\T !}\\
\T(\T 1 \times \T 1) \ar[d]^-{\langle \T \pi_1,\T \pi_2 \rangle} \ar@<+1ex>@{-->}[u]^-{\T q_{1,1}}\\
\T^2 1 \times \T^2 1 \ar[r]_-{\mu_1 \times \mu_1}  & \T 1 \times \T 1 \ar@<-1ex>@{-->}[uu]_-{q_{1,1}}
}\]
also yields commutativity of the whole diagram (via the dashed arrows). This formalises the commutativity of $+$ (defined as $\T ! \circ q_{1,1}$) with the monad multiplication. Now pre-composing this commutative diagram (dashed arrows) with the map
\[\UseComputerModernTips\xymatrix{
\T(X \times Y) \ar[r] & \T(\T 1 \times \T 1)
}\]
given by the image under $\T$ of the map $(x,y) \mapsto \langle R(x,y),S(x,y) \rangle$ yields
\[(\mu_1 \circ \T(R)) + (\mu_1 \circ \T(S)) = \mu_1 \circ \T(R+S) = \mu_1 \circ \T R'\]
and therefore, using the definition of $\sqsubseteq$, $\mu_1 \circ \T(R) \sqsubseteq \mu_1 \circ \T(R')$. This concludes the proof.
\end{proof}

Thus, $L_\T$ is a functor making the following diagram commute:
\[\UseComputerModernTips\xymatrix{
\TRel \ar[d]_-{q} \ar[r]^-{L_\T} & \TRel \ar[d]^-{q} \\
\Set \times \Set \ar[r]_-{\T \times \Id} & \Set \times \Set}\]

We are finally ready to give an alternative account of maximal traces of $\T \circ F$-coalgebras.

\begin{definition}
\label{max-trace-map}
Let $(C,\gamma)$ denote a $\T \circ F$-coalgebra, and let $(Z,\zeta)$ denote the final $F$-coalgebra. The \emph{maximal trace map $\tr_\gamma : C \to (\T 1)^Z$ of $\gamma$} is the exponential transpose of the greatest fixpoint $R : C \times Z \to \T1$ of the operator $\Op : \Rel_{C,Z} \to \Rel_{C,Z}$ given by the composition
\[\UseComputerModernTips\xymatrix@+0.3pc{
\Rel_{C,Z} \ar[r]^-{\Rel(F)} & \Rel_{F C,F Z} \ar[r]^-{L_\T} & \Rel_{\T(F C), F Z} \ar[r]^-{(\gamma \times \zeta)^*} & \Rel_{C,Z}
}\]
\end{definition}
The above definition appeals to the existence of least fixpoints in chain-complete partial orders, as formalised in the following fixpoint theorem from \cite{Priestley2002}.
\begin{theorem}[{\cite[8.22]{Priestley2002}}]
Let $P$ be a complete partial order and let $\Op : P \to P$ be order-preserving. Then $\Op$ has a least fixpoint.
\end{theorem}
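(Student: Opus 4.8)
Since this is the classical CPO fixpoint theorem, the plan is the standard transfinite-iteration argument. A complete partial order has a least element $\bot$ (the join of the empty chain), and $\bot \sqsubseteq \Op(\bot)$ holds trivially. First I would define, by transfinite recursion on the ordinals, an ascending family $(x_\alpha)_\alpha$ in $P$ by $x_0 = \bot$, $x_{\alpha+1} = \Op(x_\alpha)$, and $x_\lambda = \bigsqcup_{\alpha < \lambda} x_\alpha$ for a limit ordinal $\lambda$. Well-definedness of this recursion and the fact that $\alpha \le \beta$ implies $x_\alpha \sqsubseteq x_\beta$ are established by a single, interleaved transfinite induction: the successor step uses monotonicity of $\Op$, and at a limit stage the set $\{x_\alpha \mid \alpha < \lambda\}$ is already known to be a chain — hence directed — so by completeness of $P$ its join exists and dominates each $x_\alpha$.

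Next, a cardinality argument forces the chain to stabilise: the assignment $\alpha \mapsto x_\alpha$ cannot be strictly increasing on an ordinal of cardinality exceeding $|P|$, so there is an ordinal $\gamma$ with $x_{\gamma+1} = x_\gamma$, i.e.\ $\Op(x_\gamma) = x_\gamma$; put $x^\ast = x_\gamma$. To see that $x^\ast$ is the \emph{least} fixpoint, take any $y$ with $\Op(y) = y$ and show $x_\alpha \sqsubseteq y$ for all $\alpha$ by a further transfinite induction: the base case is $\bot \sqsubseteq y$; the successor case gives $x_{\alpha+1} = \Op(x_\alpha) \sqsubseteq \Op(y) = y$ by monotonicity and the induction hypothesis; and at a limit stage $x_\lambda = \bigsqcup_{\alpha<\lambda} x_\alpha \sqsubseteq y$ since $y$ bounds all the $x_\alpha$. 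In particular $x^\ast \sqsubseteq y$.

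The one genuinely delicate point is the limit step of the recursion: it requires the family $(x_\alpha)$ to be a chain — and hence directed, so that its join is available by completeness — \emph{before} one has finished constructing it, which is precisely why monotonicity and well-definedness must be proved together rather than in sequence. The remaining verifications (the cardinality bound on the length of a strictly ascending chain in $P$, and the two inductions) are routine, which is why in the body of the paper we simply appeal to \cite[8.22]{Priestley2002}.
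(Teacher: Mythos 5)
Your argument is correct and is essentially the same as the one the paper relies on: the theorem is quoted from \cite{Priestley2002} without proof, and the ordinal-indexed approximation sequence the paper writes down immediately afterwards ($R_0 = \top$, $R_{\alpha+1} = \Op(R_\alpha)$, meets at limit ordinals) is exactly your transfinite iteration from $\bot$, applied to the dual order. The only caveat is that your limit steps take suprema of chains of arbitrary ordinal length, so ``complete partial order'' must be read in the sense of \cite{Priestley2002} (chain/directed-complete with bottom) rather than the mere $\omega^{\op}$-chain completeness assumed earlier in the section --- a point about how the theorem is applied in the paper, not a gap in your proof.
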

Definition~\ref{max-trace-map} makes use of this result applied to the \emph{dual} of the order $\sqsubseteq$. Our assumption that $\sqsubseteq$ is $\omega^\op$-chain complete makes the dual order a complete partial order. Monotonicity of the operator in Definition~\ref{max-trace-map} is an immediate consequence of the functoriality of $\Rel(F)$, $L_\T$ and $(\gamma \times \delta)^*$.

\cite{Priestley2002} also gives a construction for the least fixpoint of an order-preserving operator on a complete partial order, which involves taking a limit over an ordinal-indexed chain. Instantiating this construction to the dual of the order $\sqsubseteq$ yields an ordinal-indexed sequence of relations $(R_\alpha)$, where:
\begin{itemize}
\item $R_0 = \top$ (i.e.~the relation on $C \times D$ given by $(c,d) \mapsto 1$),
\item $R_{\alpha+1} = \Op(R_\alpha)$,
\item $R_\alpha = \sqcap_{\beta < \alpha} R_{\beta}$, if $\alpha$ is a limit ordinal.
\end{itemize}
\begin{remark}
While in the case $\T = \Pow$, restricting to finite-state coalgebras $(C,\gamma)$ and $(D,\delta)$ results in the above sequence of relations stabilising in a finite number of steps, for $\T = \SDist$ or $T = \TW$ this is not in general the case. However, for probabilistic or weighted computations, an approximation of the greatest fixpoint may be sufficient for verification purposes, since a threshold can be provided as part of a verification task.
\end{remark}

\begin{remark}
By replacing the $F$-coalgebra $(Z,\zeta)$ by $(I,\alpha^{-1})$ with $(I,\alpha)$ an \emph{initial} $F$-algebra, one obtains an alternative account of \emph{finite} traces of states in $\T \circ F$-coalgebras, with the \emph{finite trace map} $\ftr_\gamma : C \to (\T 1)^I$ of a $\T\circ F$-coalgebra $(C,\gamma)$ being obtained via the greatest fixpoint of essentially the same operator $\Op$, but this time on $\Rel_{C,I}$. In fact, one can use any $F$-coalgebra in place of $(Z,\zeta)$, and for a specific verification task, a coalgebra with a finite state space, encoding a given linear-time behaviour, might be all that is required. 
\end{remark}

\begin{remark}
The choice of functor $F$ directly impacts on the notion of linear-time behaviour. For example, by regarding labelled transition systems as coalgebras of type $\Pow(A \times \Id)$ instead of $\Pow(1 + A \times \Id)$ (i.e.~not modelling successful termination explicitly), finite traces are not anymore accounted for -- the elements of the final $F$-coalgebra are given by infinite sequences of elements of $A$. This should not be regarded as a drawback, in fact it illustrates the flexibility of our approach.
\end{remark}

\begin{example}
Let $F$ denote an arbitrary polynomial functor (e.g.~$1 + A \times \Id$).
\begin{itemize}
\item For $T = \Pow$, the extension lifting $L_\Pow : \Rel \to \Rel$ takes a (standard) relation $R \subseteq X \times Y$ to the relation $L_\Pow(R) \subseteq \Pow (X) \times Y$ given by
\[(U,y) \in L_\Pow(R) \text{ ~if and only if~ there exists } x \in U \text{ with } (x,y) \in R\]
As a result, the greatest fixpoint of $\Op$ relates a state $c$ in a $\Pow \circ F$-coalgebra $(C,\gamma)$ with a state $z$ of the final $F$-coalgebra if and only if there exists a sequence of choices in the unfolding of $\gamma$ starting from $c$, that results in an $F$-behaviour bisimilar to $z$. This was made more precise in \cite{cirstea-11}, where infinite two-player games were developed for verifying whether a state of a $\Pow \circ F$-coalgebra has a certain maximal trace (element of the final $F$-coalgebra).
\item For $T = T_\SDist$, the extension lifting $L_\SDist : \Rel \to \Rel$ takes a valuation $R : X \times Y \to [0,1]$ to the valuation $L_\SDist(R) : \SDist(X) \times Y \to [0,1]$ given by
\[L_\SDist(R)(\varphi,y) = \sum\limits_{x \in \sup(\varphi)} \varphi(x) * R(x,y)\]
Thus, the greatest fixpoint of $\Op$ yields, for each state in a $\SDist \circ F$-coalgebra and each potential maximal trace $z$, the probability of this trace being exhibited. As computing these probabilities amounts to multiplying infinitely-many probability values, the probability of an infinite trace will often turn out to be $0$ (unless from some point in the unfolding of a particular state, probability values of $1$ are associated to the individual transitions that match a particular infinite trace). This may appear as a deficiency of our framework, and one could argue that a measure-theoretic approach, whereby a probability measure is derived from the probabilities of finite prefixes of infinite traces, would be more appropriate. Future work will investigate the need for a measure-theoretic approach. At this point, we simply point out that in a future extension of the present approach to linear-time logics (where individual maximal traces are to be replaced by linear-time temporal logic formulas), this deficiency is expected to disappear.
\item For $\T = \T_W$, the extension lifting $L_W : \Rel \to \Rel$ takes a \emph{weighted relation} $R : X \times Y \to W$ to the relation $L_W(R) : \T_W(X) \times Y \to W$ given by
\[L_W(R)(f,y) = \min_{x \in \sup(f)} (f(x) + R(x,y))\]
for $f : X \to W$ and $y \in Y$. Thus, the greatest fixpoint of $\Op$ maps a pair $(c,z)$, with $c$ a state in a $\T_W \circ F$-coalgebra and $z$ a maximal trace, to the \emph{cost} (computed via the $\min$ function) of exhibiting that trace. The case of weighted computations is somewhat different from our other two examples of branching types, in that the computation of the fixpoint starts from a relation that maps each pair of states $(c,z)$ to the value $0 \in \mathbb N^\infty$ (the top element for $\sqsubseteq$), and refines this down (w.r.t.~the $\sqsubseteq$ order) through stepwise unfolding of the coalgebra structures $\gamma$ and $\zeta$.
\end{itemize}
\end{example}
The approach presented above also applies to coalgebras of type $G \circ \T$ with $G$ a polynomial endofunctor, and more generally to coalgebras whose type is obtained as the composition of polynomial endofunctors and the monad $\T$, with possibly several occurrences of $\T$ in this composition. In the case of $G \circ \T$-coalgebras, instantiating our approach yields different results to the extension semantics proposed in \cite{JacobsSS12}. Specifically, the instantiation involves taking $(Z,\zeta)$ to be a final $G$-coalgebra and $(C,\gamma)$ to be an arbitrary $G \circ \T$-coalgebra, and considering the monotone operator on $\Rel_{C,Z}$ given by the composition
\begin{equation}
\label{o1}
\UseComputerModernTips\xymatrix@+0.3pc{
\Rel_{C,Z} \ar[r]^-{L_\T} & \Rel_{\T C,Z} \ar[r]^-{\Rel(G)} & \Rel_{G(\T C), G Z} \ar[r]^-{(\gamma \times \zeta)^*} & \Rel_{C,Z}
}
\end{equation}
The following example illustrates the difference between our approach and that of \cite{JacobsSS12}.
\begin{example}
For $G = 2 \times \Id^A$ with $A$ a finite alphabet and $\T = \Pow$, $G \circ \T$-coalgebras are non-deterministic automata, whereas the elements of the final $G$-coalgebra are given by functions $z : A^* \to 2$ and correspond to languages over $A$. In this case, the greatest fixpoint of the operator in (\ref{o1}) maps a pair $(c,z)$, with $c$ a state of the automaton and $z$ a language over $A$, to $\top$ if and only if there exists a sequence of choices in the unfolding of the automaton starting from $c$ that results in a deterministic automaton which accepts the language denoted by $z$. Taking the union over all $z$ such that $(c,z)$ is mapped to $\top$ now gives the language accepted by the non-deterministic automaton with $c$ as initial state, but only under the assumption that for each $a \in A$, an $a$-labelled transition exists from any state of the automaton. This example points to the need to further generalise our approach, so that in particular it can also be applied to pairs consisting of a $G \circ \T$-coalgebra and a $G'$-coalgebra, with $G'$ different from $G$. This would involve considering relation liftings for pairs of (polynomial) endofunctors. We conjecture that taking $G$ and $\T$ as above and $G' = 1 + A \times \Id$ would allow us to recover the notion of acceptance of a finite word over $A$ by a non-deterministic automaton.
\end{example}
Finally, we sketch the general case of coalgebras whose type is obtained as the composition of several endofunctors on $\Set$, one of which is a monad $\T$ that accounts for the presence of branching in the system, while the remaining endofunctors are polynomial and jointly determine the notion of linear-time behaviour. For simplicity of presentation, we only consider coalgebras of type $G \circ \T \circ F$, with the final $G \circ F$-coalgebra $(Z,\zeta)$ providing the domain of possible linear-time behaviours.
\begin{definition}
\label{linear-time-beh}
The \emph{linear-time behaviour} of a state in a coalgebra $(C,\gamma)$ of type $G \circ {\T} \circ F$ is the greatest fixpoint of an operator $\Op$ on $\Rel_{C,Z}$ defined by the composition:
\begin{equation}
\label{o2}
\UseComputerModernTips\xymatrix@+0.3pc{
\Rel_{C,Z} \ar[r]^-{\Rel(F)} & \Rel_{F C,F Z} \ar[r]^-{L_\T} & \Rel_{\T(F C), F Z}  \ar[r]^-{\Rel(G)} & \Rel_{G(\T F C), G F Z} \ar[r]^-{(\gamma \times \zeta)^*} & \Rel_{C,Z}
}
\end{equation}
\end{definition}
The greatest fixpoint of $\Op$ measures the extent with which a state in a $G \circ \T \circ F$-coalgebra can exhibit a given linear behaviour (element of the final $G \circ F$-coalgebra). Definition~\ref{linear-time-beh} generalises straightforwardly to coalgebraic types given by arbitrary compositions of polynomial endofunctors and the monad $\T$, with the extension lifting $L_\T$ being used once for each occurrence of $\T$ in such a composition. 
\begin{example}
\label{input-output}
Coalgebras of type $G \circ \T \circ F$, where $G = (1 + \Id)^A$ and $F = \Id \times B$,  model systems with branching, with both inputs (from a finite set $A$)  and outputs (in a set $B$). In this case, the possible linear behaviours are given by special trees, with both finite and infinite branches, whose edges are labelled by elements of $A$ (from each node, one outgoing edge for each $a \in A$), and whose nodes (with the exception of the root) are either labelled by $* \in 1$ (for leaves) or by an element of $B$ (for non-leaves). The linear-time behaviour of a state in a $G \circ \T \circ F$-coalgebra is then given by:
\begin{itemize}
\item the set of trees that can be exhibited from that state, when ${\T = \Pow}$,
\item the probability of exhibiting each tree (with the probabilities corresponding to different branches being \emph{multiplied} when computing this probability), when ${\T = \SDist}$, and 
\item the minimum cost of exhibiting each tree (with the costs of different branches being \emph{added} when computing this cost), when ${\T = \TW}$.
\end{itemize}
\end{example}
The precise connection between our approach and earlier work in \cite{HasuoJS07,cirstea-11,JacobsSS12} is yet to be explored. In particular, our assumptions are different from those of loc.\,cit., for example in \cite{HasuoJS07} the DCPO$_\bot$-enrichedness of the Kleisli category of $\T$ is required.

\begin{remark}
Our approach does not \emph{directly} apply to the probability distribution monad (defined similarly to the sub-probability distribution monad, but with probabilities adding up to exactly $1$), as this monad does not satisfy the condition $\T \emptyset = 1$ of Definition~\ref{additive}. However, systems where branching is described using probability distributions can still be dealt with, by regarding all probability distributions as sub-probability distributions.
\end{remark}

In the remainder of this section, we briefly explore the usefulness of an operator similar to $\Op$, which employs a similar extension lifting arising from the \emph{double strength} of the monad $\T$. We begin by noting that a result similar to Proposition~\ref{prop-kock} is proved in \cite{Kock12} for a commutative monad on a cartesian closed category.
\begin{proposition}[{\cite[Proposition~9.3]{Kock12}}]
Let $(B,\beta)$ be a $\T$-algebra. Then any $f : X \times Y \to B$ extends uniquely along $\eta_X \times \eta_Y$ to a bilinear $\tilde{f} : \T X \times \T Y \to B$, making the following triangle commute:
\[\UseComputerModernTips\xymatrix{
\T X \times \T Y \ar[r]^-{\tilde{f}} & B \\
X \times Y \ar[u]^-{\eta_X \times \eta_Y} \ar[ur]_-{f}
}\]
\end{proposition}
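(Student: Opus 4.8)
The plan is to obtain $\tilde f$ by applying Proposition~\ref{prop-kock} twice, once in each variable, and then to verify bilinearity and uniqueness. Concretely, first use Proposition~\ref{prop-kock} to extend $f : X \times Y \to B$ along $\eta_X \times 1_Y$ to the unique $1$-linear map $\overline{f} : \T X \times Y \to B$. Next, invoke the symmetric form of Proposition~\ref{prop-kock} --- obtained by conjugating with the twist maps $\tw$ (and using the strength $\st$ in place of $\st'$), which for any $g : Z \times Y \to B$ yields a unique $2$-linear $g^\sharp : Z \times \T Y \to B$ with $g^\sharp \circ (1_Z \times \eta_Y) = g$ --- and apply it to $g = \overline f$ with $Z = \T X$. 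This produces $\tilde f := \overline{f}^\sharp : \T X \times \T Y \to B$, which is $2$-linear by construction and satisfies $\tilde f \circ (1_{\T X} \times \eta_Y) = \overline f$. One expects $\tilde f$ to coincide with the explicit composite $\beta \circ \T(f) \circ \dst_{X,Y}$, and this identification can be checked separately if a closed form is wanted.

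Commutativity of the triangle is then a short chase: $\tilde f \circ (\eta_X \times \eta_Y) = \bigl(\tilde f \circ (1_{\T X} \times \eta_Y)\bigr) \circ (\eta_X \times 1_Y) = \overline f \circ (\eta_X \times 1_Y) = f$, using in turn the defining property of $\tilde f$ as the second extension and of $\overline f$ as the first.

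The substantive step is to show that $\tilde f$ is also $1$-linear, so that it is genuinely bilinear. I would prove this by a uniqueness argument at the level of the second variable: consider the two composites $\T^2 X \times \T Y \to B$ appearing in the $1$-linearity square for $\tilde f$, namely $\tilde f \circ (\mu_X \times 1_{\T Y})$ and $\beta \circ \T(\tilde f) \circ \st'_{\T X, \T Y}$. Both are $2$-linear in the $\T Y$-argument --- the first because $\tilde f$ is, the second by a routine diagram chase from naturality of $\st'$ and $\st$, associativity of $\mu$, and coherence of $\st$ with $\mu$ --- and they become equal after precomposition with $1_{\T^2 X} \times \eta_Y$, where they both reduce (using naturality of $\st'$ and the $1$-linearity of $\overline f$) to $\overline f \circ (\mu_X \times 1_Y) = \beta \circ \T(\overline f) \circ \st'_{\T X, Y}$. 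By the uniqueness clause of the symmetric Proposition~\ref{prop-kock}, the two composites agree, which is exactly $1$-linearity of $\tilde f$. This is where I expect the main difficulty to lie: reconciling the single strength $\st'$ with the iterated/double strength requires commutativity of $\T$ in an essential way, and the bookkeeping of the coherence diagrams is the most delicate part of the argument.

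Finally, uniqueness of $\tilde f$ follows by unwinding the same two extensions. If $g : \T X \times \T Y \to B$ is any bilinear map with $g \circ (\eta_X \times \eta_Y) = f$, then $g$ is in particular $1$-linear, so $g \circ (1_{\T X} \times \eta_Y) : \T X \times Y \to B$ is $1$-linear and restricts along $\eta_X \times 1_Y$ to $f$; by the uniqueness in Proposition~\ref{prop-kock} it equals $\overline f$. Thus $g$ is a $2$-linear map restricting along $1_{\T X} \times \eta_Y$ to $\overline f$, and by the uniqueness in the symmetric form of Proposition~\ref{prop-kock} we conclude $g = \tilde f$.
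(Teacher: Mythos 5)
Your proof is correct, but be aware that the paper itself gives no proof of this statement: it is imported verbatim from \cite[Proposition~9.3]{Kock12}, so there is no in-paper argument to match against. Your route --- extend $f$ in the first variable via Proposition~\ref{prop-kock}, extend the result in the second variable via the twist-symmetric version, and then recover $1$-linearity of $\tilde f$ by a uniqueness argument among $2$-linear maps --- is a standard and sound way to prove it, and in fact it mirrors the paper's own gloss in Remark~\ref{alt-lifting}, which presents the double extension lifting as the composite of $L_\T$ with the dual ($2$-linear) lifting. You also locate the one genuinely non-formal step correctly: showing that $\beta \circ \T(\tilde f) \circ \st'_{\T X,\T Y}$ is $2$-linear reduces, after rewriting $\beta \circ \T\beta$ as $\beta \circ \mu_B$ and using naturality of $\mu$, to the identity $\mu \circ \T\st' \circ \st = \mu \circ \T\st \circ \st'$, which is exactly the commutativity of $\T$ (the two composites defining $\dst$ agree); your other two auxiliary facts ($\tilde f \circ (\mu_X \times 1_{\T Y})$ is $2$-linear by naturality of $\st$, and both composites restrict along $1_{\T^2 X} \times \eta_Y$ to $\overline{f} \circ (\mu_X \times 1_Y)$ using $1$-linearity of $\overline f$) check out, as does the uniqueness argument. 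The alternative, more direct route --- the one the paper's subsequent remark takes for $L'_\T$ in the case $B = \T 1$ --- is to define $\tilde f := \beta \circ \T(f) \circ \dst_{X,Y}$ outright and verify bilinearity from the coherence of $\dst$ with $\eta$ and $\mu$; that version front-loads the same commutativity facts into properties of $\dst$, whereas yours isolates them in a single uniqueness step, and the two constructions coincide, as you note.
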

Here, bilinearity amounts to linearity in each argument.
\begin{definition}
For a commutative monad $\T : \Set \to \Set$, the \emph{double extension lifting} $L_\T' : \Rel \to \Rel$ is the functor taking a relation $R : X \times Y \to \T1$ to its unique bilinear extension $\tilde{R} : \T X \times \T Y \to \T1$.
\end{definition}
\begin{remark}
\label{alt-lifting}
An alternative definition of $L_\T'$ is as the composition of $L_\T$ with a dual lifting, which takes a relation $R : X \times Y \to \T 1$ to its unique $2$-linear extension $\overline{R} : X \times \T Y \to \T 1$.
\end{remark}
\begin{remark}
Again, it can be shown that a direct definition of the relation $\tilde{R} : \T X \times \T Y \to \T 1$ is as the composition
\[\UseComputerModernTips\xymatrix@+0.3pc{
\T X \times \T Y \ar[r]^-{\dst_{X,Y}} & \T(X \times Y) \ar[r]^-{\T (R)} & \T^2 1 \ar[r]^-{\mu_1} & \T 1
}\]
\end{remark}

\begin{proposition}
\label{prop-functoriality-dual}
The mapping $R \in \Rel_{X,Y} \mapsto \overline{R} \in \Rel_{X,\T Y}$ is functorial.
\end{proposition}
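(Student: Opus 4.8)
The plan is to mirror the proof of Proposition~\ref{prop-functoriality}, exploiting the fact that its argument factors into a part that is symmetric in the two variables and a part that merely involves pre-composition with a strength map. First I would record the direct description of the $2$-linear extension: dually to the remark following the definition of the extension lifting, the unique $2$-linear $\overline{R} : X \times \T Y \to \T 1$ extending $R : X \times Y \to \T 1$ along $1_X \times \eta_Y$ — whose existence and uniqueness follow from the evident mirror image of Proposition~\ref{prop-kock}, with the strength $\st_{X,Y}$ in place of the swapped strength $\st'_{X,Y}$ — is given by the composition
\[\UseComputerModernTips\xymatrix@+0.3pc{
X \times \T Y \ar[r]^-{\st_{X,Y}} & \T(X \times Y) \ar[r]^-{\T(R)} & \T^2 1 \ar[r]^-{\mu_1} & \T 1
}\]
Functoriality in the fibre $\Rel_{X,Y}$ means order preservation: given $R,R' \in \Rel_{X,Y}$ with $R \sqsubseteq R'$, we must exhibit $\overline{R} \sqsubseteq \overline{R'}$ in $\Rel_{X,\T Y}$, i.e.~pointwise on $X \times \T Y$.

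Second, I would observe that since $R \sqsubseteq R'$ there is $S \in \Rel_{X,Y}$ with $R + S = R'$ (pointwise), and that it suffices to prove $\mu_1 \circ \T(R) \sqsubseteq \mu_1 \circ \T(R')$ pointwise on $\T(X \times Y)$: pre-composition with the fixed function $\st_{X,Y} : X \times \T Y \to \T(X \times Y)$ trivially preserves the pointwise preorder on $\T 1$-valued maps, so this inequality transports along $\st_{X,Y}$ to the desired $\overline{R} \sqsubseteq \overline{R'}$. But $\mu_1 \circ \T(R) \sqsubseteq \mu_1 \circ \T(R')$ is precisely the inequality established inside the proof of Proposition~\ref{prop-functoriality}, and the argument for it — pre-composing the diagram expressing compatibility of $+$ (i.e.~$\T[1,1] \circ q_{1,1}$) with the monad multiplication $\mu_1$, which in turn rests on \cite[Lemma~15\,(iii)]{CoumansJ2011}, with the image under $\T$ of the map $(x,y) \mapsto \langle R(x,y),S(x,y)\rangle$ — makes no reference whatsoever to which strength was used to form the extension. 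Hence the same computation applies verbatim here.

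Alternatively, and perhaps most economically, I could invoke Remark~\ref{alt-lifting}, which identifies the mapping $R \mapsto \overline{R} \in \Rel_{X,\T Y}$ as one of the two factors of the double extension lifting $L_\T'$: the entire analysis of the extension lifting $L_\T$ carries over to it after interchanging the roles of the two variables, so functoriality is inherited by symmetry. Either way I do not expect a genuine obstacle; the only point worth stating explicitly — and the one place where a little care is needed — is that the "core" inequality $\mu_1 \circ \T(R) \sqsubseteq \mu_1 \circ \T(R')$ depends only on the commutation of $+$ with $\mu_1$ and is therefore genuinely independent of the left/right asymmetry, so that appealing to symmetry is legitimate rather than circular. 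Everything else is routine.
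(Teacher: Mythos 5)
Your proposal is correct and matches the paper's (implicit) argument: the paper states this proposition without a separate proof, relying on exactly the adaptation you spell out of Proposition~\ref{prop-functoriality}, where the core inequality $\mu_1 \circ \T(R) \sqsubseteq \mu_1 \circ \T(R')$ is independent of which strength map ($\st_{X,Y}$ versus $\st'_{X,Y}$) is pre-composed to form the extension. Your explicit observation that this makes the symmetry appeal non-circular is a welcome clarification, but the route is essentially the same.
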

We now fix \emph{two} $\T \circ F$-coalgebras $(C,\gamma)$ and $(D,\delta)$ and explore the greatest fixpoint of the operator $\Op' : \Rel_{C,D} \to \Rel_{C,D}$ defined by the composition
\[\UseComputerModernTips\xymatrix@+0.3pc{
\Rel_{C,D} \ar[r]^-{\Rel(F)} & \Rel_{F C,F D} \ar[r]^-{L_\T'} & \Rel_{\T(F C), \T(F D)} \ar[r]^-{(\gamma \times \zeta)^*} & \Rel_{C,D}
}\]

As before, the operator $\Op'$ is monotone and therefore admits a greatest fixpoint. We argue that this fixpoint also yields useful information regarding the linear-time behaviour of states in $\T \circ F$-coalgebras. Moreover, this generalises to coalgebras whose types are arbitrary compositions of polynomial functors and the branching monad $\T$. This is expected to be of relevance when extending the linear-time view presented here to linear-time logics and associated formal verification techniques. The connection to formal verification constitutes work in progress, but the following examples motivate our claim that the lifting $L_\T'$ is worth further exploration.

\begin{example}Let $F : \Set \to \Set$ be a polynomial endofunctor, describing some linear-type behaviour.
\begin{enumerate}
\item For non-deterministic systems (i.e.~$\Pow \circ F$-coalgebras), the greatest fixpoint of $\Op'$ relates two states if and only if they admit a common maximal trace.
\item For probabilistic systems (i.e.~$\SDist \circ F$-coalgebras), the greatest fixpoint of $\Op'$ measures the probability of two states exhibiting the same maximal trace.

\item For weighted systems (i.e.~$\T_W \circ F$-coalgebras), the greatest fixpoint of $\Op'$ measures the \emph{joint} minimal cost of two states exhibiting the same maximal trace. To see this, note that the lifting $L_W' : \Rel \to \Rel$ takes a weighted relation $R : X \times Y \to W$ to the relation $L_W'(R) : \T_W(X) \times \T_W(Y) \to W$ given by
\[L_W'(R)(f,g) = \min_{x \in \sup(f),y\in \sup(g)} (f(x) + g(y) + R(x,y))\]

\end{enumerate}
\end{example}

\section{Conclusions and Future Work}

We have provided a general and uniform account of the linear-time behaviour of a state in a coalgebra whose type incorporates some notion of branching (captured by a monad on $\Set$). Our approach is compositional, and so far applies to notions of linear behaviour specified by \emph{polynomial} endofunctors on $\Set$. The key ingredient of our approach is the notion of extension lifting, which allows the branching behaviour of a state to be abstracted away in a coinductive fashion.

Immediate future work will attempt to exploit the results of \cite{Ghani2011,Ghani2012} in order to define generalised relation liftings for \emph{arbitrary} endofunctors on $\Set$, and to extend our approach to other base categories. The work in loc.\,cit.~could also provide an alternative description for the greatest fixpoint used in Definition~\ref{linear-time-beh}.

The present work constitutes a stepping stone towards a coalgebraic approach to the formal verification of linear-time properties. This will employ linear-time coalgebraic temporal logics for the specification of system properties, and automata-based techniques for the verification of these properties, as outlined in \cite{Cirstea11} for the case of non-deterministic systems.

\bibliography{fics}

\end{document}